\lstdefinelanguage{Maple}% 
{morekeywords={and,assuming,break,by,catch,description,do,done,% 
    elif,else,end,error,export,fi,finally,for,from,global,if,% 
    implies,in,intersect,local,minus,mod,module,next,not,od,% 
    option,options,or,proc,quit,read,return,save,stop,subset,then,% 
    to,try,union,use,uses,while,xor},% 
  sensitive=true,% 
  morecomment=[l]\#,% 
  morestring=[b]",% 
  morestring=[d]"% 
}[keywords,comments,strings]
\newcommand{\ACF}{\mathfrak{A}}
\newcommand{\K}{\mathfrak{K}}
\newcommand{\CC}{\mathbb{C}}
\newcommand{\dotequal}{\mathrel{\dot{=}}}
\newcommand{\LR}{\mathcal{L}}
\newcommand{\FF}{\mathbb{F}}
\newcommand{\KK}{\mathbb{K}}
\newcommand{\NN}{\mathbb{N}}
\newcommand{\QQ}{\mathbb{Q}}
\newcommand{\RR}{\mathbb{R}}
\newcommand{\ZZ}{\mathbb{Z}}
\newcommand{\hilb}[1]{\widehat{#1}}
\theoremstyle{plain}
\newtheorem{conj}{Conjecture}
\newtheorem{lemma}[conj]{Lemma}
\newtheorem{proposition}[conj]{Proposition}
\theoremstyle{definition}
\newtheorem{example}[conj]{Example}
\begin{document}
\title{First-Order Tests for Toricity}

\author{Hamid Rahkooy\\
  CNRS, Inria, and the University of Lorraine, Nancy, France\\
  \texttt{hamid.rahkooy@inria.fr}
  \and
  Thomas Sturm\\
  CNRS, Inria, and the University of Lorraine, Nancy, France\\
  MPI Informatics and Saarland University, Saarbrücken, Germany\\
  \texttt{thomas.sturm@loria.fr}}

\maketitle

\begin{abstract}
  Motivated by problems arising with the symbolic analysis of steady state
  ideals in Chemical Reaction Network Theory, we consider the problem of testing
  whether the points in a complex or real variety with non-zero coordinates form
  a coset of a multiplicative group. That property corresponds to Shifted
  Toricity, a recent generalization of toricity of the corresponding polynomial
  ideal. The key idea is to take a geometric view on varieties rather than an
  algebraic view on ideals. Recently, corresponding coset tests have been
  proposed for complex and for real varieties. The former combine numerous
  techniques from commutative algorithmic algebra with Gröbner bases as the
  central algorithmic tool. The latter are based on interpreted first-order
  logic in real closed fields with real quantifier elimination techniques on the
  algorithmic side. Here we take a new logic approach to both theories, complex
  and real, and beyond. Besides alternative algorithms, our approach provides a
  unified view on theories of fields and helps to understand the relevance and
  interconnection of the rich existing literature in the area, which has been
  focusing on complex numbers, while from a scientific point of view the
  (positive) real numbers are clearly the relevant domain in chemical reaction
  network theory. We apply prototypical implementations of our new approach to a
  set of 129 models from the BioModels repository.
\end{abstract}

\section{Introduction}

We are interested in situations where the points with non-zero coordinates in a
given complex or real variety form a multiplicative group or, more generally, a
coset of such a group. For irreducible varieties this corresponds to toricity
\cite{fulton_introduction_2016,eisenbud-sturmfels-binomials} and shifted
toricity \cite{grigoriev_milman2012, GrigorievIosif:19a}, respectively, of both
the varieties and the corresponding ideals.

While toric varieties are well established and have an important role in
algebraic geometry \cite{fulton_introduction_2016,eisenbud-sturmfels-binomials},
our principal motivation here to study generalizations of toricity comes from
the sciences, specifically \textit{chemical reaction networks} such as the
following model of the kinetics of intra- and intermolecular zymogen activation
with formation of an enzyme-zymogen complex
\cite{doi:10.1111/j.1432-1033.2004.04400.x}, which can also be found as model
no.~92\footnote{\url{https://www.ebi.ac.uk/compneur-srv/biomodels-main/publ-model.do?mid=BIOMD0000000092}}
in the BioModels database \cite{BioModels2015a}:
\begin{gather*}
  \ch[label-style=\scriptsize]{Z ->[0.004] P + E}\\
  \ch[label-style=\scriptsize]{Z + E <=>[1000][2.1E-4] E-Z ->[5.4E-4] P + 2 E}
\end{gather*}
Here Z stands for zymogen, P is a peptide, E is an enzyme, E---Z is the enzyme
substrate complex formed from that enzyme and zymogen. The reactions are
labelled with reaction \emph{rate constants}.

Let $x_1$, \dots, $x_4: \RR \to \RR$ denote the \emph{concentrations} over time of the
\emph{species} Z, P, E, E--Z, respectively. Assuming mass action kinetics one can
derive \emph{reaction rates} and furthermore a system of autonomous ordinary
differential equations describing the development of concentrations in the
overall network \cite[Section 2.1.2]{feinberg-book}:
\begin{align*}
  \dot{x_1} &= p_1/100000, & p_1 &= -100000000 x_{1} x_{2} - 400 x_{1} + 21 x_{4},\\
  \dot{x_3} &= p_3/50000, & p_2 &= -100000000 x_{1} x_{2} + 400 x_{1} + 129 x_{4},\\
  \dot{x_2} &= p_2/100000, & p_3 &= 200 x_{1} + 27 x_{4},\\
  \dot{x_4} &= p_4/4000, & p_4 &= 4000000 x_{1} x_{2} - 3 x_{4}.
\end{align*}
The chemical reaction is in \emph{equilibrium} for positive concentrations of
species lying in the real variety of the steady state ideal
\begin{displaymath}
  \langle p_1, \dots, p_4 \rangle \subseteq \ZZ[x_1, \dots, x_4],
\end{displaymath}
intersected with the first orthant of $\RR^4$.

Historically, the principle of \emph{detailed balancing} has attracted
considerable attention in the sciences. It states that at equilibrium every
single reaction must be in equilibrium with its reverse reaction. Detailed
balancing was used by Boltzmann in 1872 in order to prove his H-theorem
\cite{boltzmann1964lectures}, by Einstein in 1916 for his quantum theory of
emission and absorption of radiation \cite{einstein1916strahlungs}, and by
Wegscheider \cite{Wegscheider1901} and Onsager \cite{onsager1931reciprocal} in
the context of \textit{chemical kinetics}, which lead to Onsager's Nobel prize
in Chemistry in 1968. In the field of symbolic computation, Grigoriev and Weber
\cite{GrigorievWeber2012a} applied results on binomial varieties to study
reversible chemical reactions in the case of detailed balancing.

In particular with the assumption of irreversible reactions, like in our
example, detailed balancing has been generalized to \emph{complex balancing}
\cite{feinberg_stability_1987,feinberg-book,Horn1972}, which has widely been
used in the context of chemical reaction networks. Here one considers
\emph{complexes}, like Z, P + E, Z + E, etc.~in our example, and requires for
every such complex that the sum of the reaction rates of its inbound reactions
equals the sum of the reaction rates of its outbound reactions.

Craciun et al.~\cite{craciun_toric_2009} showed that \emph{toric dynamical
  systems} \cite{Feinberg1972,Horn1972}, in turn, generalize complex balancing.
The generalization of the principle of complex balancing to toric dynamical
systems has obtained considerable attention in the last years
\cite{perez_millan_chemical_2012,gatermann_bernsteins_2005,craciun_toric_2009,muller_sign_2016}.
Millan, Dickenstein and Shiu in \cite{perez_millan_chemical_2012} considered
steady state ideals with binomial generators. They presented a sufficient linear
algebra condition on the stoichiometry matrix of a chemical reaction network in
order to test whether the steady state ideal has binomial generators. Conradi
and Kahle showed that the sufficient condition is even equivalent when the ideal
is homogenous \cite{conradi2015detecting,kahle-binomial-package-2010,Kahle2010}.
That condition also led to the introduction of MESSI systems
\cite{millan_structure_2018}. Recently, binomiality of steady states ideals was
used to infer network structure of chemical reaction networks out of measurement
data \cite{Wang_Lin_Sontag_Sorger_2019}.

Besides its scientific adequacy as a generalization of complex balancing there
are practical motivations for studying toricity. Relevant models are typically
quite large. For instance, with our comprehensive computations in this article
we will encounter one system with 90 polynomials in dimension 71. This brings
symbolic computation to its limits. Our best hope is to discover systematic
occurrences of specific structural properties in the models coming from a
specific context, e.g.~the life sciences, and to exploit those structural
properties towards more efficient algorithms. In that course, toricity could
admit tools from toric geometry, e.g., for dimension reduction.

Detecting toricity of varieties in general, and of steady state varieties of
chemical reaction networks in particular, is a difficult problem. The first
issue in this regard is finding suitable notions to describe the structure of
the steady states. Existing work, such as the publications mentioned above,
typically focuses on the complex numbers and addresses algebraic properties of
the steady state ideal, e.g., the existence of binomial Gröbner bases. Only
recently, a group of researchers including the authors of this article have
taken a geometric approach, focusing on varieties rather than ideals
\cite{GrigorievIosif:19a,grigoriev_milman2012}. Besides irreducibility, the
characteristic property for varieties $V$ to be toric over a field $\KK$ is that
$V \cap (\KK^*)^n$ forms a multiplicative group. More generally, one considers
\emph{shifted toricity}, where $V \cap (\KK^*)^n$ forms a coset of a multiplicative
group.

It is important to understand that chemical reaction network theory generally
takes place in the interior of the first orthant of $\RR^n$, i.e., all species
concentrations and reaction rates are assumed to be strictly positive
\cite{feinberg-book}. Considering $(\CC^*)^n$ in contrast to $\CC^n$ resembles the
strictness condition, and considering also $(\RR^*)^n$ in
\cite{GrigorievIosif:19a} was another step in the right direction.

The plan of the article is as follows. In Section~\ref{se:firstorder} we
motivate and formally introduce first-order characterizations for shifted
toricity, which have been used already in \cite{GrigorievIosif:19a}, but
exclusively with real quantifier elimination methods. In
Section~\ref{se:generalization} we put a model theoretic basis and prove
transfer principles for our characterizations throughout various classes of
fields, with zero as well as with positive characteristics. In
Section~\ref{se:hilbert} we employ Hilbert's Nullstellensatz as a decision
procedure for uniform word problems and use logic tests also over algebraically
closed fields. This makes the link between the successful logic approach from
\cite{GrigorievIosif:19a} and the comprehensive existing literature cited above.
Section~\ref{se:complexity} clarifies some asymptotic worst-case complexities
for the sake of scientific rigor. In Section~\ref{se:computations} it turns out
that for a comprehensive benchmark set of 129 models from the BioModels database
\cite{BioModels2015a} quite simple and maintainable code, requiring only
functionality available in most decent computer algebra systems and libraries,
can essentially compete with highly specialized and more complicated purely
algebraic methods. This motivates in Section~\ref{se:conclusions} a perspective
that our symbolic computation approach has a potential to be interesting for
researchers in the life sciences, with communities much larger than our own,
with challenging applications, not least in the health sector.

\section{Syntax: First-Order Formulations of Group and Coset Properties}\label{se:firstorder}
In this section we set up our first-order logic framework. We are going to use
interpreted first-order logic with equality over the signature
$\LR = (0, 1, +, -, \cdot)$ of rings.

For any field $\KK$ we denote its multiplicative group $\KK \setminus \{0\}$ by
$\KK^*$. For a coefficient ring $Z \subseteq \KK$ and
$F \subseteq Z[x_1, \dots, x_n]$ we denote by $V_\KK(F)$, or shortly $V(F)$, the variety of
$F$ over $\KK$. Our signature $\LR$ naturally induces coefficient rings rings
$Z = \ZZ/p$ for finite characteristic $p$, and $Z = \ZZ$ for characteristic $0$. We
define $V(F)^* = V(F) \cap (\KK^*)^n \subseteq (\KK^*)^n$. Note that the direct product
$(\KK^*)^n$ establishes again a multiplicative group.

Let $F = \{f_1, \dots, f_m\} \subseteq Z[x_1, \dots, x_n]$. The following semi-formal conditions
state that $V(F)^*$ establishes a coset of a multiplicative subgroup of
$(\KK^*)^n$:
\begin{gather}
  \forall g, x \in (\KK^*)^n\mathord{:}
    \quad g \in V(F) \land gx \in V(F) \Rightarrow gx^{-1} \in V(F)
    \label{eq:AAAinv}\\[1ex]
  \forall g, x, y \in (\KK^*)^n\mathord{:}
    \quad g \in V(F) \land gx \in V(F) \land gy \in V(F) \Rightarrow gxy \in V(F)
    \label{eq:AAAmult}\\[1ex]
  V(F) \cap (\KK^*)^n \neq \emptyset.
    \label{eq:AAAne}
\end{gather}

If we replace (\ref{eq:AAAne}) with the stronger condition
\begin{equation}
  \label{eq:group}
  1 \in V(F),
\end{equation}
then $V(F)^*$ establishes even a multiplicative subgroup of $(\KK^*)^n$. We allow
ourselves to less formally say that $V(F)^*$ is a coset or group over $\KK$,
respectively.

Denote $M = \{1, \dots, m\}$, $N = \{1, \dots, n\}$, and for
$(i, j) \in M \times N$ let $d_{ij} = \deg_{x_j}(f_i)$. We shortly write
$x = (x_1, \dots, x_n)$, $y=(y_1, \dots, y_n)$, $g=(g_1, \dots, g_n)$. Multiplication
between $x$, $y$, $g$ is coordinate-wise, and
$x^{d_i} = x_1^{d_{i1}} \cdots x_n^{d_{in}}$. As a first-order $\LR$-sentence,
condition (\ref{eq:AAAinv}) yields
\begin{multline*}
  \iota ~\dotequal~ \textstyle
  \forall g_1 \dots \forall g_n \forall x_1 \dots \forall x_n \biggl(
  \bigwedge\limits_{j=1}^n g_j \neq 0 \land
  \bigwedge\limits_{j=1}^n x_j \neq 0 \land{}\\
  \textstyle
  \bigwedge\limits_{i=1}^m f_i(g_1,\dots,g_n) = 0 \land
  \bigwedge\limits_{i=1}^m f_i(g_1x_1,\dots,g_nx_n) = 0
  \longrightarrow \bigwedge\limits_{i=1}^m x^{d_i} f_i(g_1x_1^{-1},\dots,g_nx_n^{-1}) = 0\biggr).
\end{multline*}
Here the multiplications with $x^{d_i}$ drop the principal denominators from
$f_i(g_1x_1^{-1},\dots,g_nx_n^{-1})$. This is an equivalence transformation, because
the left hand side of the implication constrains $x_1$, \dots,~$x_n$ to be different
from zero.

Similarly, condition (\ref{eq:AAAmult}) yields a first-order $\LR$-sentence
\begin{multline*}
  \mu ~\dotequal~ \textstyle
  \forall g_1 \dots \forall g_n
  \forall x_1 \dots \forall x_n
  \forall y_1 \dots \forall y_n
  \biggl(
  \bigwedge\limits_{j=1}^n g_j \neq 0 \land
  \textstyle
  \bigwedge\limits_{j=1}^n x_j \neq 0 \land
  \bigwedge\limits_{j=1}^n y_j \neq 0 \land{}\\
  \textstyle
  \bigwedge\limits_{i=1}^m f_i(g_1,\dots,g_n) = 0 \land
  \bigwedge\limits_{i=1}^m f_i(g_1x_1,\dots,g_nx_n) = 0 \land
  \bigwedge\limits_{i=1}^m f_i(g_1y_1,\dots,g_ny_n) = 0\\
  \textstyle
  \longrightarrow \bigwedge\limits_{i=1}^m f_i(g_1x_1y_1, \dots, g_nx_ny_n) = 0\biggr).
\end{multline*}

For condition (\ref{eq:AAAne}) we consider its logical negation
$V(F) \cap (\KK^*)^n = \emptyset$, which gives us an $\LR$-sentence
\begin{displaymath}
  \eta ~\dotequal~ \textstyle
  \forall x_1 \dots \forall x_n \biggl(
  \bigwedge\limits_{i=1}^m f_i = 0 \longrightarrow \bigvee\limits_{j=1}^n x_j = 0
  \biggr).
\end{displaymath}
Accordingly, the $\LR$-sentence $\lnot\eta$ formally states (\ref{eq:AAAne}).

Finally, condition (\ref{eq:group}) yields a quantifier-free $\LR$-sentence
\begin{displaymath}
  \textstyle
  \gamma ~\dotequal~ \bigwedge\limits_{i=1}^m f_i(1, \dots, 1) = 0.
\end{displaymath}

\section{Semantics: Validity of Our First-Order Formulations over Various Fields}\label{se:generalization}
Let $p \in \NN$ be $0$ or prime. We consider the $\LR$-model classes $\K_p$ of
fields of characteristic $p$ and $\ACF_p \subseteq \K_p$ of algebraically closed fields
of characteristic $p$. Recall that $\ACF_p$ is complete, decidable, and admits
effective quantifier elimination \cite[Note 16]{Tarski:48a}.

We assume without loss of generality that $\LR$-sentences are in prenex normal
form $Q_1 x_1 \dots Q_n x_n \psi$ with $Q_1$, \dots,
$Q_n \in \{\exists, \forall\}$ and $\psi$ quantifier-free. An $\LR$-sentence is called
\emph{universal} if it is of the form $\forall x_1 \dots \forall x_n \psi$ and \emph{existential}
if it is of the form $\exists x_1 \dots \exists x_n \psi$ with $\psi$ quantifier-free. A
quantifier-free $\LR$-sentence is both universal and existential.
\begin{lemma}\label{le:k0iffacf0}
  Let $\varphi$ be a universal $\LR$-sentence. Then
  \begin{displaymath}
    \K_p \models \varphi \quad \text{if and only if} \quad \ACF_p \models \varphi.
  \end{displaymath}
\end{lemma}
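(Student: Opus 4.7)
The plan is to reduce the statement to the classical model-theoretic fact that universal sentences are preserved under substructures. One direction is trivial, and the other requires only the existence of an algebraic closure.

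For the forward direction, suppose $\K_p \models \varphi$. Since $\ACF_p \subseteq \K_p$, every $\KK \in \ACF_p$ is in particular a member of $\K_p$ and hence satisfies $\varphi$. So $\ACF_p \models \varphi$.

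For the converse, suppose $\ACF_p \models \varphi$ and let $\KK \in \K_p$ be arbitrary. I would fix an algebraic closure $\overline{\KK}$ of $\KK$. Since $\overline{\KK}$ is algebraically closed of characteristic $p$, we have $\overline{\KK} \in \ACF_p$, and therefore $\overline{\KK} \models \varphi$ by hypothesis. Now observe that $\KK$ is an $\LR$-substructure of $\overline{\KK}$: the inclusion map respects the constants $0, 1$ and the operations $+, -, \cdot$ of the signature $\LR = (0, 1, +, -, \cdot)$. Writing $\varphi$ as $\forall x_1 \dots \forall x_n \psi(x_1, \dots, x_n)$ with $\psi$ quantifier-free, for any tuple $(a_1, \dots, a_n) \in \KK^n$ we also have $(a_1, \dots, a_n) \in \overline{\KK}^n$, so $\overline{\KK} \models \psi(a_1, \dots, a_n)$. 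Because $\psi$ is quantifier-free and built from polynomial equations over $\ZZ$ (or $\ZZ/p$), its truth value at a tuple from $\KK$ depends only on the $\LR$-operations applied to entries of that tuple, and therefore coincides when evaluated in $\KK$ or in $\overline{\KK}$. Hence $\KK \models \psi(a_1, \dots, a_n)$ for every such tuple, which gives $\KK \models \varphi$. Since $\KK$ was arbitrary, $\K_p \models \varphi$.

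There is no real obstacle here; the only subtlety worth flagging is the invocation of the general preservation lemma (universal sentences pass down to substructures), which I have spelled out directly in terms of $\LR$-structures rather than cited abstractly. One may alternatively phrase the proof as: the negation $\lnot\varphi$ is existential, and existential sentences are preserved upward under extensions, so $\KK \not\models \varphi$ would propagate to $\overline{\KK} \not\models \varphi$, contradicting the hypothesis.
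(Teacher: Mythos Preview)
Your proof is correct and follows essentially the same route as the paper: the forward direction by $\ACF_p \subseteq \K_p$, and the converse by passing to the algebraic closure $\overline{\KK}$ and using that universal sentences are preserved under substructures. The only cosmetic difference is that the paper cites the completeness of $\ACF_p$ when concluding $\overline{\KK}\models\varphi$, whereas you (more directly) just use $\overline{\KK}\in\ACF_p$; your explicit unpacking of the substructure-preservation step is a harmless elaboration of what the paper states in one line.
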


\begin{proof}
  The implication from the left to the right immediately follows from
  $\ACF_p \subseteq \K_p$. Assume, vice versa, that $\ACF_p \models \varphi$, and let
  $\KK \in \K_p$. Then $\KK$ has an algebraic closure $\overline \KK \in \ACF_p$, and
  $\overline \KK \models \varphi$ due to the completeness of $\ACF_p$. Since
  $\KK \subseteq \overline \KK$ and $\varphi$ as a universal sentence is persistent under
  substructures, we obtain $\KK \models \varphi$.
\end{proof}

All our first-order conditions $\iota$, $\mu$, $\eta$, and $\gamma$ introduced in the previous
section~\ref{se:firstorder} are universal $\LR$-sentences. Accordingly,
$\lnot \eta$ is equivalent to an existential $\LR$-sentence.

In accordance with the our language $\LR$ we are going to use polynomial
coefficient rings $Z_p = \ZZ/p$ for finite characteristic $p$, and $Z_0 = \ZZ$. Let
$F \subseteq Z_p[x_1,\dots,x_n]$. Then $V(F)^*$ is a coset over $\KK \in \K_p$ if and only if
\begin{equation}
  \KK \models \iota \land \mu \land \lnot \eta.
\end{equation}
Especially, $V(F)^*$ is a group over $\KK$ if even
\begin{equation}
  \KK \models \iota \land \mu \land \gamma,
\end{equation}
where $\gamma$ entails $\lnot \eta$.

\begin{proposition}\label{prop:group}
  Let $F \subseteq Z_p[x_1, \dots, x_n]$, and let $\KK \in \K_p$. Then $V(F)^*$ is a group over
  $\KK$ if and only if at least one of the following conditions holds:
  \begin{enumerate}[(a)]
  \item $\KK' \models \iota \land \mu \land \gamma$ for some $\KK \subseteq \KK' \in \K_p$;
  \item $\KK' \models \iota \land \mu \land \gamma$ for some $\KK' \in \ACF_p$.
  \end{enumerate}
\end{proposition}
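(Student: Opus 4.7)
The plan is to reduce everything to the fact noted in the text immediately before the proposition: $V(F)^*$ is a group over $\KK$ if and only if $\KK \models \iota \land \mu \land \gamma$. Since $\iota$ and $\mu$ are universal $\LR$-sentences and $\gamma$ is quantifier-free (indeed ground), the conjunction $\varphi \dotequal \iota \land \mu \land \gamma$ is again a universal $\LR$-sentence, so I can apply Lemma~\ref{le:k0iffacf0} together with general preservation theorems.

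For the nontrivial direction, assume (a) or (b). If (a) holds, then $\KK \subseteq \KK' \in \K_p$ with $\KK' \models \varphi$, and since universal sentences are persistent under substructures, $\KK \models \varphi$, so $V(F)^*$ is a group over $\KK$. If (b) holds, then some $\KK' \in \ACF_p$ satisfies $\varphi$; by completeness of $\ACF_p$ this gives $\ACF_p \models \varphi$, and Lemma~\ref{le:k0iffacf0} then upgrades this to $\K_p \models \varphi$, whence in particular $\KK \models \varphi$. One could equivalently argue directly by passing to the algebraic closure $\overline\KK \in \ACF_p$ and then descending to the substructure $\KK$, which is precisely the argument already carried out inside the proof of Lemma~\ref{le:k0iffacf0}.

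For the forward direction, suppose $V(F)^*$ is a group over $\KK$, so $\KK \models \varphi$. Then (a) is witnessed by the trivial choice $\KK' \dotequal \KK \in \K_p$, and there is nothing to prove. Note that (b) need not hold in this case: an algebraic extension may acquire new witnesses that falsify $\iota$ or $\mu$, so condition~(b) is genuinely stronger than~(a), which is why the proposition uses the disjunction.

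There is no real obstacle here beyond bookkeeping; the only subtlety is recognizing that (a) alone already gives the forward implication, while (b) is the algorithmically useful sufficient condition that becomes available precisely because Lemma~\ref{le:k0iffacf0} transfers universal statements between $\K_p$ and $\ACF_p$. Thus the proof is essentially a short assembly of (i) persistence of universal sentences under substructures, (ii) completeness of $\ACF_p$, and (iii) Lemma~\ref{le:k0iffacf0}.
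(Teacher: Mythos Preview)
Your proof is correct and follows essentially the same approach as the paper: the forward direction takes $\KK' = \KK$ to witness~(a), and the converse handles case~(a) via persistence of universal sentences under substructures and case~(b) via completeness of $\ACF_p$ together with Lemma~\ref{le:k0iffacf0}. Your additional commentary that (b) is genuinely stronger than (a) is correct and a nice observation, though not needed for the proof itself.
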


\begin{proof}
  Recall that $V(F)^*$ is a group over $\KK$ if and only if
  $\KK \models \iota \land \mu \land \gamma$. If $V(F)^*$ is a group over $\KK$, then (a) holds for
  $\KK' = \KK$. Vice versa, there are two cases. In case (a), we can conclude that
  $\KK \models \iota \land \mu \land \gamma$ because the universal sentence
  $\iota \land \mu \land \gamma$ is persistent under substructures. In case (b), we have
  $\ACF_p \models \iota \land \mu \land \gamma$ by the completeness of that model class. Using
  Lemma~\ref{le:k0iffacf0} we obtain
  $\K_p \models \iota \land \mu \land \gamma$, in particular $\KK \models \iota \land \mu \land \gamma$.
\end{proof}

\begin{example}
  \begin{enumerate}[(i)]
  \item Assume that $V(F)^*$ is a group over $\CC$. Then $V(F)^*$ is a group over
    any field of characteristic $0$. Alternatively, it suffices that $V(F)^*$ is
    a group over the countable algebraic closure $\overline \QQ$ of $\QQ$.
  \item Assume that $V(F)^*$ is a group over the countable field of real
    algebraic numbers, which is not algebraically closed. Then again $V(F)^*$ is
    a group over any field of characteristic $0$.
  \item Let $\varepsilon$ be a positive infinitesimal, and assume that $V(F)^*$ is a group
    over $\RR(\varepsilon)$. Then $V(F)^*$ is group also over $\QQ$ and $\RR$, but not
    necessarily over $\overline{\QQ}$. Notice that $\RR(\varepsilon)$ is not algebraically
    closed.
  \item Assume that $V(F)^*$ is a group over the algebraic closure of $\FF_p$.
    Then $V(F)^*$ is a group over any field of characteristic $p$.
    Alternatively, it suffices that $V(F)^*$ is a group over the algebraic
    closure of the rational function field $\FF_p(t)$, which has been studied
    with respect to effective computations \cite{Kedlaya2006}.
  \end{enumerate}
\end{example}

\begin{proposition}\label{prop:coset}
  Let $F \subseteq Z_p[x_1, \dots, x_n]$ and let $\KK \in \K_p$. Then $V(F)^*$ is a coset over
  $\KK$ if and only if $\KK \models \lnot \eta$ and at least one of the following conditions
  holds:
  \begin{enumerate}[(a)]
  \item $\KK' \models \iota \land \mu$ for some $\KK \subseteq \KK' \in \K_p$;
  \item $\KK' \models \iota \land \mu$ for some $\KK' \in \ACF_p$.
  \end{enumerate}
\end{proposition}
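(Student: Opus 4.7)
The plan is to mirror the proof of Proposition~\ref{prop:group} almost verbatim, with the crucial caveat that the existence condition $\lnot\eta$ is not universal and therefore not persistent under substructures. Concretely, I would start from the characterization established in Section~\ref{se:firstorder} that $V(F)^*$ is a coset over $\KK$ if and only if $\KK \models \iota \land \mu \land \lnot\eta$.

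For the forward direction, if $V(F)^*$ is a coset over $\KK$, then $\KK \models \lnot\eta$ holds directly, and condition (a) is witnessed by $\KK' = \KK \in \K_p$, which trivially satisfies $\KK' \models \iota \land \mu$.

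For the backward direction, I assume $\KK \models \lnot\eta$ together with (a) or (b) and aim to derive $\KK \models \iota \land \mu$. In case (a), the sentence $\iota \land \mu$ is universal (as noted just after Lemma~\ref{le:k0iffacf0}) and hence persistent under substructures, so $\KK \subseteq \KK' \models \iota \land \mu$ yields $\KK \models \iota \land \mu$. In case (b), the witness $\KK' \in \ACF_p$ together with the completeness of $\ACF_p$ gives $\ACF_p \models \iota \land \mu$; then Lemma~\ref{le:k0iffacf0} applied to the universal sentence $\iota \land \mu$ upgrades this to $\K_p \models \iota \land \mu$, which in particular implies $\KK \models \iota \land \mu$. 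Combined with the hypothesized $\KK \models \lnot\eta$, this yields $\KK \models \iota \land \mu \land \lnot\eta$, i.e.\ that $V(F)^*$ is a coset over $\KK$.

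The only substantive point to keep in mind, and therefore the main conceptual obstacle compared to Proposition~\ref{prop:group}, is that $\lnot\eta$ is existential rather than universal. This is why it cannot be transferred from a witnessing $\KK'$ down to $\KK$ via persistence, nor transferred from $\ACF_p$ to $\K_p$ via Lemma~\ref{le:k0iffacf0}; consequently $\KK \models \lnot\eta$ must be assumed on $\KK$ itself as a separate hypothesis, in contrast to the quantifier-free condition $\gamma$ used in Proposition~\ref{prop:group}, which could be folded into (a) and (b).
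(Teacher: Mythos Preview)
Your proposal is correct and follows essentially the same approach as the paper: the paper's proof likewise starts from the characterization $\KK \models \iota \land \mu \land \lnot\eta$, handles the forward direction by taking $\KK' = \KK$ in (a), and for the backward direction simply says that $\KK \models \iota \land \mu$ is obtained ``analogously to the proof of Proposition~\ref{prop:group}.'' You have merely unfolded that analogy explicitly and added the (correct) remark on why $\lnot\eta$ must be assumed on $\KK$ itself.
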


\begin{proof}
  Recall that $V(F)^*$ is a coset over $\KK$ if and only if
  $\KK \models \iota \land \mu \land \lnot \eta$. If $V(F)^*$ is a coset over
  $\KK$, then $\KK \models \lnot \eta$, and (a) holds for $\KK' = \KK$. Vice versa, we require that
  $\KK \models \lnot \eta$ and obtain $\KK \models \iota \land \mu$ analogously to the proof of
  Proposition~\ref{prop:group}.
\end{proof}

\begin{example}
  \begin{enumerate}[(i)]
  \item Assume that $V(F)^*$ is a coset over $\CC$. Then $V(F)^*$ is a coset over
    $\RR$ if and only if $V(F)^* \neq \emptyset$ over $\RR$. This is the case for
    $F = \{x^2-2\}$ but not for $F = \{x^2+2\}$.
  \item Consider $F = \{x^4-4\} = \{(x^2-2)(x^2+2)\}$. Then over $\RR$,
    $V(F)^* = \{\pm \sqrt{2}\}$ is a coset, because
    $V(F)^*/\sqrt{2} = \{\pm 1\}$ is a group. Similarly over $\CC$,
    $V(F)^* = \{\pm \sqrt{2}, \pm i\sqrt{2}\}$ is a coset, as
    $V(F)^*/\sqrt{2} = \{\pm 1, \pm i\}$ is a group.
  \item Consider $F = \{x^4+x^2-6\} = \{(x^2-2)(x^2+3)\}$. Then over $\RR$,
    $V(F)^* = \{\pm \sqrt{2}\}$ is a coset, as $V(F)^*/\sqrt{2} = \{\pm 1\}$ is a
    group. Over $\CC$, in contrast, $V(F)^* = \{\pm \sqrt{2}, \pm i\sqrt{3}\}$ is not
    a coset.
  \end{enumerate}
\end{example}

\section{Hilbert's Nullstellensatz as a Swiss Army Knife}\label{se:hilbert}
A recent publication \cite{GrigorievIosif:19a} has systematically applied coset
tests to a large number for real-world models from the BioModels database
\cite{BioModels2015a}, investigating varieties over both the real and the
complex numbers. Over $\RR$ it used essentially our first-order sentences
presented in Section~\ref{se:firstorder} and applied efficient implementations
of real decision methods based on effective quantifier elimination
\cite{Weispfenning:88a,Weispfenning:97b,DolzmannSturm:97a,DolzmannSturm:97c,Seidl:06a,Kosta:16a}.

Over $\CC$, in contrast, it used a purely algebraic framework combining various
specialized methods from commutative algebra, typically based on Gröbner basis
computations \cite{Buchberger:65a,Faugere:99a}. This is in line with the vast
majority of the existing literature (cf.~the Introduction for references), which
uses computer algebra over algebraically closed fields, to some extent
supplemented with heuristic tests based on linear algebra.

Generalizing the successful approach for $\RR$ and aiming at a more uniform
overall framework, we want to study here the application of decision methods for
algebraically closed fields to our first-order sentences. Recall that our
sentences $\iota$, $\mu$, $\eta$, and $\gamma$ are universal $\LR$-sentences. Every such
sentence $\varphi$ can be equivalently transformed into a finite conjunction of
universal $\LR$-sentences of the following form:
\begin{displaymath}
  \textstyle
  \widehat \varphi ~\dotequal~ \forall x_1 \dots \forall x_n\biggl(
  \bigwedge\limits_{i=1}^m f_i(x_1, \dots, x_n) = 0 \longrightarrow g(x_1, \dots, x_n) = 0
  \biggr),
\end{displaymath}
where $f_1$, \dots, $f_m$, $g \in Z_p[x_1, \dots, x_n]$. Such $\LR$-sentences are called
uniform word problems \cite{BeckerWeispfenning:93a}. Over an algebraically
closed field $\bar \KK$ of characteristic $p$, Hilbert's Nullstellensatz
\cite{Hilbert:93a} provides a decision procedure for uniform word problems. It
states that
\begin{displaymath}
  \bar \KK \models \hilb \varphi \quad \text{if and only if} \quad g \in \sqrt{\langle f_1, \dots, f_m \rangle}.
\end{displaymath}
Recall that $\ACF_p$ is complete so that we furthermore have
$\ACF_p \models \hilb \varphi$ if and only if $\bar \KK \models \hilb \varphi$.

Our $\LR$-sentence $\iota$ for condition (\ref{eq:AAAinv}) can be equivalently
transformed into
\begin{multline*}
  \textstyle
  \forall g_1 \dots \forall g_n
  \forall x_1 \dots \forall x_n
  \biggl(
  \bigvee\limits_{j=1}^n g_j = 0 \lor
  \bigvee\limits_{j=1}^n x_j = 0 \lor{}\\
  \textstyle
  \bigvee\limits_{i=1}^m f_i(g_1,\dots,g_n) \neq 0 \lor
  \bigvee\limits_{i=1}^m f_i(g_1x_1,\dots,g_nx_n) \neq 0
  \lor \bigwedge\limits_{i=1}^m x^{d_i} f_i(g_1x_1^{-1},\dots,g_nx_n^{-1}) = 0
  \biggr),
\end{multline*}
which is in turn equivalent to
\begin{multline*}
  \hilb \iota ~\dotequal~ \textstyle
  \bigwedge\limits_{k=1}^m
  \forall g_1 \dots \forall g_n
  \forall x_1 \dots \forall x_n
  \biggl(
  \bigwedge\limits_{i=1}^m f_i(g_1,\dots,g_n)=0 \land
  \bigwedge\limits_{i=1}^m f_i(g_1x_1,\dots,g_nx_n) = 0\\
  \textstyle
  \longrightarrow x^{d_k} f_k(g_1x_1^{-1},\dots,g_nx_n^{-1}) \prod\limits_{j=1}^n g_jx_j = 0
  \biggr).
\end{multline*}
Hence, by Hilbert's Nullstellensatz, (\ref{eq:AAAinv}) holds in $\bar \KK$ if and
only if
\begin{equation}
  \label{eq:rinv}
  \textstyle
  x^{d_k} f_k(g_1x_1^{-1},\dots,g_nx_n^{-1}) \prod\limits_{j=1}^ng_jx_j \in R_{\ref{eq:AAAinv}}
  \quad \text{for all} \quad k \in M,
\end{equation}
where $R_{\ref{eq:AAAinv}} = \sqrt{\langle\,f_i(g_1,\dots,g_n), f_i(g_1x_1,\dots,g_nx_n) \mid i \in M \,\rangle}$.

Similarly, our $\LR$-sentence $\mu$ for condition (\ref{eq:AAAmult}) translates
into
\begin{multline*}
  \hilb \mu ~\dotequal~ \textstyle
  \bigwedge\limits_{k=1}^m
  \forall g_1 \dots \forall g_n
  \forall x_1 \dots \forall x_n
  \forall y_1 \dots \forall y_n\\
  \textstyle
  \biggl(
  \bigwedge\limits_{i=1}^m f_i(g_1,\dots,g_n)=0 \land
  \bigwedge\limits_{i=1}^m f_i(g_1x_1,\dots,g_nx_n) = 0 \land 
  \bigwedge\limits_{i=1}^m f_i(g_1y_1,\dots,g_ny_n) = 0\\
  \textstyle
  \longrightarrow f_k(g_1x_1y_1,\dots,g_nx_ny_n) \prod\limits_{j=1}^n g_jx_jy_j = 0
  \biggr).
\end{multline*}
Again, by Hilbert's Nullstellensatz, (\ref{eq:AAAmult}) holds in $\bar \KK$ if and
only if
\begin{equation}
  \label{eq:rmult}
  \textstyle
  f_k(g_1x_1y_1, \dots, g_nx_ny_n) \prod\limits_{j=1}^n g_jx_jy_j \in R_{\ref{eq:AAAmult}}
  \quad \text{for all} \quad k\in M,
\end{equation}
where $R_{\ref{eq:AAAmult}} = \sqrt{\langle\,f_i(g), f_i(gx), f_i(gy) \mid i \in M\,\rangle}$.

Next, our $\LR$-sentence $\eta$ is is equivalent to
\begin{displaymath}
  \textstyle
  \hilb \eta ~\dotequal~ \forall x_1 \dots \forall x_n\biggr(
  \bigwedge\limits_{i=1}^m f_i = 0 \longrightarrow \prod\limits_{j=1}^n x_j = 0
  \biggl).
\end{displaymath}
Using once more Hilbert's Nullstellensatz, $\bar \KK \models \hilb \eta$ if and only if
\begin{equation}\label{eq:empty}
  \textstyle
  \prod\limits_{j=1}^n x_j \in R_{\ref{eq:AAAne}},
\end{equation}
where $R_{\ref{eq:AAAne}} = \sqrt{\langle f_1, \dots, f_m \rangle}$. Hence our non-emptiness
condition~(\ref{eq:AAAne}) holds in $\bar \KK$ if and only if
\begin{equation}\label{eq:nonempty}
  \textstyle
  \prod\limits_{j=1}^n x_j \notin R_{\ref{eq:AAAne}}.
\end{equation}

Finally, our $\LR$-sentence $\gamma$ for condition~(\ref{eq:group}) is equivalent
to
\begin{displaymath}
  \textstyle
  \hilb \gamma ~\dotequal~ \bigwedge\limits_{k=1}^m \bigl(0 = 0 \longrightarrow f_k(1, \dots, 1) = 0\bigr).
\end{displaymath}
Here Hilbert's Nullstellensatz tells us that condition~(\ref{eq:group}) holds in
$\bar \KK$ if and only if
\begin{equation}
  \label{eq:group2}
  f_k(1, \dots, 1) \in R_{\ref{eq:group}} \quad \text{for all} \quad k \in M,
\end{equation}
where $R_{\ref{eq:group}} = \sqrt{\langle 0 \rangle} = \langle 0 \rangle$. Notice that the radical
membership test quite naturally reduces to the obvious test with plugging in.

\section{Complexity}\label{se:complexity}
Let us briefly discuss asymptotic complexity bounds around problems and methods
addressed here. We do so very roughly, in terms of the input word length. The
cited literature provides more precise bounds in terms of several complexity
parameters, such as numbers of quantifiers, or degrees.

The decision problem for algebraically closed fields is double exponential
\cite{Heintz:83a} in general, but only single exponential when the number of
quantifier alternations is bounded \cite{Grigorev:87a}, which covers in
particular our universal formulas. The decision problem for real closed fields
is double exponential as well \cite{DavenportHeintz:88a}, even for linear
problems \cite{Weispfenning:88a}; again it becomes single exponential when
bounding the number of quantifier alternations \cite{Grigoriev:88a}.

Ideal membership tests are at least double exponential \cite{MAYR1982305}, and
it was widely believed that this would impose a corresponding lower bound also
for any algorithm for Hilbert's Nullstellensatz. Quite surprisingly, it turned
out that there are indeed single exponential such algorithms
\cite{10.2307/1971361,10.2307/1990996}.

On these grounds it is clear that our coset tests addressed in the previous
sections can be solved in single exponential time for algebraically closed
fields as well as for real closed fields. Recall that our considering those
tests is actually motivated by our interest in shifted toricity, which requires,
in addition, the irreducibility of the considered variety over the considered
domain. Recently it has been shown that testing shifted toricity, including
irreducibility, is also only single exponential over algebraically closed fields
as well as real closed fields \cite{GrigorievIosif:19a}.

Most asymptotically fast algorithms mentioned above are not implemented and it
is not clear that they would be efficient in practice.

\begin{figure}
  \lstset{language=Maple, basicstyle=\scriptsize, basewidth=0.5em,
    numbers=right, aboveskip=0pt, belowskip=0pt}

\begin{lstlisting}
ToricHilbert := proc(F::list(polynom))
uses PolynomialIdeals;

local Iota := proc()::truefalse;
local R1, s, prod, f;
   R1 := < op(subs(zip(`=`, xl, gl), F)), op(subs(zip((x, g) -> x = g*x, xl, gl), F)) >;
   s := zip((x, g) -> x = g/x, xl, gl);
   prod := g * x;
   for f in subs(s, F) do
      if not RadicalMembership(numer(f) * prod, R1) then
         return false
      end if
   end do;
   return true
end proc;

local Mu := proc()::truefalse;
local R2, s, prod, f;
   R2 := < op(subs(zip(`=`, xl, gl), F)), op(subs(zip((x, g) -> x = g*x, xl, gl), F)),
           op(subs(zip(`=`, xl, zip(`*`, gl, yl)), F)) >;
   s := zip(`=`, xl, zip(`*`, gl, zip(`*`, xl, yl)));
   prod := g * x * y;
   for f in subs(s, F) do
      if not RadicalMembership(f * prod, R2) then
         return false
      end if
   end do;
   return true
end proc;

local Eta := proc()::truefalse;
local R3, prod;
   R3 := < op(F) >;
   prod := foldl(`*`, 1, op(xl));
   return RadicalMembership(prod, R3)
end proc;

local Gamma := proc()::truefalse;
local R4, s, f;
   R4 := < 0 >;
   s := map(x -> x=1, xl);
   for f in subs(s, F) do
      if not RadicalMembership(f, R4) then
         return false
      end if
   end do;
   return true
end proc;

local Rename := proc(base::name, l::list(name))::list(name);
uses StringTools;
   return map(x -> cat(base, Select(IsDigit, x)), l)
end proc;

local X, xl, gl, yl, g, x, y, iota, t_iota, mu, t_mu, eta, t_eta, gamma_, t_gamma, coset, group, t;
   t := time();
   xl := convert(indets(F), list);
   x := foldl(`*`, 1, op(xl));
   gl := Rename('g', xl);
   g := foldl(`*`, 1, op(gl));
   yl := Rename('y', xl);
   y := foldl(`*`, 1, op(yl));
   t_iota := time(); iota := Iota(); t_iota := time() - t_iota;
   t_mu := time(); mu := Mu(); t_mu := time() - t_mu;
   t_eta := time(); eta := Eta(); t_eta := time() - t_eta;
   t_gamma := time(); gamm := Gamma(); t_gamma := time() - t_gamma;
   coset := iota and mu and not eta;
   group := iota and mu and gamm;
   t := time() - t;
   return nops(F), nops(xl), iota, t_iota, mu, t_mu, eta, t_eta, gamm, t_gamma, coset, group, t
end proc;
\end{lstlisting}
  \caption{Maple code for computing one row of
    Table~\ref{tab:computations}\label{fig:maple}}
\end{figure}

\section{Computational Experiments}\label{se:computations}
We have studied 129 models from the
BioModels\footnote{\url{https://www.ebi.ac.uk/biomodels/}} database
\cite{BioModels2015a}. Technically, we took our input from
ODEbase\footnote{\url{http://odebase.cs.uni-bonn. de/}} which provides
preprocessed versions for symbolic computation. Our 129 models establish the
complete set currently provided by ODEbase for wich the relevant systems of
ordinary differential equations have polynomial vector fields.

We limited ourselves to characteristic 0 and applied the tests (\ref{eq:rinv}),
(\ref{eq:rmult}), (\ref{eq:empty}), (\ref{eq:group2}) derived in
Section~\ref{se:hilbert} using Hilbert's Nullstellensatz. Recall that those
tests correspond to $\iota$, $\mu$, $\eta$, $\gamma$ from Section~\ref{se:generalization},
respectively, and that one needs $\iota \land \mu \land \lnot \eta$ or
$\iota \land \mu \land \gamma$ for cosets or groups, respectively. From a symbolic computation
point of view, we used exclusively polynomial arithmetic and radical membership
test. The complete Maple code for computing a single model is displayed in
Figure~\ref{fig:maple}; it is surprisingly simple.

\afterpage{%
  \thispagestyle{empty}
  \begin{table}
    \centering
    \caption{Results and computation times (in seconds) of our computations on
      models from the BioModels database
      \cite{BioModels2015a}\label{tab:computations}}
    \medskip
    
    \scriptsize
    \begin{tabular}{@{}lrrlrlrlrlrllr@{}}
      \hline
      model  &  $m$  &  $n$  &  $\iota$  &  $t_\iota$  &  $\mu$  &  $t_\mu$  &  $\eta$  &  $t_\eta$ &  $\gamma$  &  $t_\gamma$  &  coset  &  group  &  $t_\Sigma$\\
      \hline
      001  &  12  &  12  &  true  &  7.826  &  true  &  7.86  &  false  &  4.267  &  false  &  0.053  &  true  &  false  &  20.007\\
      040  &  5  &  3  &  false  &  1.415  &  false  &  0.173  &  false  &  0.114  &  false  &  0.043  &  false  &  false  &  1.746\\
      050  &  14  &  9  &  true  &  1.051  &  true  &  2.458  &  true  &  0.113  &  false  &  0.05  &  false  &  false  &  3.673\\
      052  &  11  &  6  &  true  &  3.605  &  true  &  1.635  &  true  &  0.096  &  false  &  0.059  &  false  &  false  &  5.396\\
      057  &  6  &  6  &  true  &  0.271  &  true  &  0.263  &  false  &  0.858  &  false  &  0.045  &  true  &  false  &  1.438\\
      072  &  7  &  7  &  true  &  0.763  &  true  &  0.496  &  true  &  0.08  &  false  &  0.06  &  false  &  false  &  1.4\\
      077  &  8  &  7  &  true  &  0.296  &  true  &  0.356  &  false  &  0.097  &  false  &  0.051  &  true  &  false  &  0.801\\
      080  &  10  &  10  &  true  &  0.714  &  true  &  1.341  &  true  &  0.103  &  false  &  0.06  &  false  &  false  &  2.219\\
      082  &  10  &  10  &  true  &  0.384  &  true  &  0.39  &  true  &  0.086  &  false  &  0.041  &  false  &  false  &  0.902\\
      091  &  16  &  14  &  true  &  0.031  &  true  &  0.045  &  true  &  0.003  &  false  &  0.062  &  false  &  false  &  0.142\\
      092  &  4  &  3  &  true  &  0.293  &  true  &  0.244  &  false  &  0.104  &  false  &  1.03  &  true  &  false  &  1.671\\
      099  &  7  &  7  &  true  &  0.298  &  true  &  0.698  &  false  &  0.087  &  false  &  0.036  &  true  &  false  &  1.119\\
      101  &  6  &  6  &  false  &  4.028  &  false  &  10.343  &  false  &  0.917  &  false  &  0.073  &  false  &  false  &  15.361\\
      104  &  6  &  4  &  true  &  0.667  &  true  &  0.146  &  true  &  0.084  &  false  &  0.039  &  false  &  false  &  0.937\\
      105  &  39  &  26  &  true  &  0.455  &  true  &  0.367  &  true  &  0.043  &  false  &  0.038  &  false  &  false  &  0.905\\
      125  &  5  &  5  &  false  &  0.193  &  false  &  0.098  &  false  &  0.078  &  false  &  0.038  &  false  &  false  &  0.408\\
      150  &  4  &  4  &  true  &  0.173  &  true  &  0.153  &  false  &  0.094  &  false  &  0.043  &  true  &  false  &  0.464\\
      156  &  3  &  3  &  true  &  2.638  &  true  &  0.248  &  false  &  0.86  &  false  &  0.052  &  true  &  false  &  3.8\\
      158  &  3  &  3  &  false  &  0.148  &  false  &  0.149  &  false  &  0.16  &  false  &  0.045  &  false  &  false  &  0.503\\
      159  &  3  &  3  &  true  &  0.959  &  true  &  0.175  &  false  &  0.083  &  false  &  0.04  &  true  &  false  &  1.257\\
      178  &  6  &  4  &  true  &  0.52  &  true  &  1.71  &  true  &  0.877  &  false  &  1.201  &  false  &  false  &  4.308\\
      186  &  11  &  10  &  true  &  31.785  &  true  &  1026.464  &  true  &  1.956  &  false  &  0.095  &  false  &  false  &  1060.301\\
      187  &  11  &  10  &  true  &  27.734  &  true  &  1023.648  &  true  &  0.103  &  false  &  0.062  &  false  &  false  &  1051.548\\
      188  &  20  &  10  &  true  &  0.075  &  true  &  0.079  &  true  &  0.04  &  false  &  0.047  &  false  &  false  &  0.242\\
      189  &  18  &  7  &  true  &  0.035  &  true  &  0.02  &  true  &  0.002  &  false  &  0.062  &  false  &  false  &  0.12\\
      194  &  5  &  5  &  false  &  2.338  &  false  &  1.922  &  false  &  0.612  &  false  &  0.05  &  false  &  false  &  4.922\\
      197  &  7  &  5  &  false  &  7.562  &  false  &  71.864  &  false  &  0.485  &  false  &  0.05  &  false  &  false  &  79.962\\
      198  &  12  &  9  &  true  &  0.397  &  true  &  0.793  &  true  &  0.077  &  false  &  0.042  &  false  &  false  &  1.31\\
      199  &  15  &  8  &  true  &  1.404  &  true  &  1.531  &  false  &  0.215  &  false  &  0.054  &  true  &  false  &  3.205\\
      220  &  58  &  56  &  true  &  146.146  &  true  &  534.832  &  true  &  6.921  &  false  &  0.964  &  false  &  false  &  688.866\\
      227  &  60  &  39  &  true  &  0.273  &  true  &  0.485  &  true  &  0.01  &  false  &  0.077  &  false  &  false  &  0.847\\
      229  &  7  &  7  &  true  &  1.917  &  true  &  3.348  &  false  &  0.131  &  false  &  0.062  &  true  &  false  &  5.458\\
      233  &  4  &  2  &  false  &  0.16  &  false  &  0.44  &  false  &  0.17  &  false  &  0.557  &  false  &  false  &  1.328\\
      243  &  23  &  19  &  true  &  8.598  &  true  &  1171.687  &  true  &  2.512  &  false  &  0.171  &  false  &  false  &  1182.97\\
      259  &  17  &  16  &  true  &  1.334  &  true  &  1.913  &  true  &  0.092  &  false  &  0.045  &  false  &  false  &  3.385\\
      260  &  17  &  16  &  true  &  2.182  &  true  &  0.748  &  true  &  0.079  &  false  &  0.047  &  false  &  false  &  3.057\\
      261  &  17  &  16  &  true  &  3.359  &  true  &  2.872  &  true  &  0.113  &  false  &  0.095  &  false  &  false  &  6.44\\
      262  &  11  &  9  &  true  &  0.402  &  true  &  0.41  &  true  &  0.091  &  false  &  0.071  &  false  &  false  &  0.975\\
      263  &  11  &  9  &  true  &  0.379  &  true  &  0.403  &  true  &  0.085  &  false  &  0.066  &  false  &  false  &  0.934\\
      264  &  14  &  11  &  true  &  1.031  &  true  &  2.036  &  true  &  0.136  &  false  &  0.063  &  false  &  false  &  3.268\\
      267  &  4  &  3  &  true  &  1.084  &  true  &  0.246  &  true  &  0.095  &  false  &  0.049  &  false  &  false  &  1.475\\
      271  &  6  &  4  &  true  &  0.286  &  true  &  0.283  &  true  &  0.746  &  false  &  0.045  &  false  &  false  &  1.361\\
      272  &  6  &  4  &  true  &  0.361  &  true  &  0.323  &  true  &  0.086  &  false  &  0.055  &  false  &  false  &  0.826\\
      281  &  32  &  32  &  true  &  20.987  &  true  &  29.791  &  true  &  0.602  &  false  &  0.055  &  false  &  false  &  51.437\\
      282  &  6  &  3  &  true  &  0.205  &  true  &  0.19  &  true  &  0.087  &  false  &  0.046  &  false  &  false  &  0.528\\
      283  &  4  &  3  &  true  &  0.294  &  true  &  0.211  &  true  &  0.087  &  false  &  0.412  &  false  &  false  &  1.005\\
      289  &  5  &  4  &  false  &  2.291  &  false  &  1.118  &  false  &  0.165  &  false  &  0.044  &  false  &  false  &  3.619\\
      292  &  6  &  2  &  true  &  0.06  &  true  &  0.048  &  true  &  0.063  &  false  &  0.046  &  false  &  false  &  0.218\\
      306  &  5  &  2  &  true  &  0.149  &  true  &  0.121  &  false  &  0.079  &  false  &  0.041  &  true  &  false  &  0.391\\
      307  &  5  &  2  &  true  &  0.129  &  true  &  0.121  &  true  &  0.043  &  false  &  0.148  &  false  &  false  &  0.441\\
      310  &  4  &  1  &  true  &  0.053  &  true  &  0.369  &  true  &  0.047  &  false  &  0.04  &  false  &  false  &  0.509\\
      311  &  4  &  1  &  true  &  0.076  &  true  &  0.048  &  true  &  0.224  &  false  &  0.048  &  false  &  false  &  0.397\\
      312  &  3  &  2  &  true  &  0.098  &  true  &  0.512  &  true  &  0.043  &  false  &  0.043  &  false  &  false  &  0.697\\
      314  &  12  &  10  &  true  &  0.515  &  true  &  1.789  &  true  &  0.1  &  false  &  0.059  &  false  &  false  &  2.464\\
      321  &  3  &  3  &  true  &  0.163  &  true  &  0.148  &  true  &  0.042  &  false  &  0.039  &  false  &  false  &  0.393\\
      357  &  9  &  8  &  true  &  0.353  &  true  &  1.517  &  true  &  0.07  &  false  &  0.045  &  false  &  false  &  1.986\\
      359  &  9  &  8  &  true  &  1.677  &  true  &  3.605  &  true  &  0.11  &  false  &  0.055  &  false  &  false  &  5.448\\
      360  &  9  &  8  &  true  &  0.479  &  true  &  0.47  &  true  &  0.096  &  false  &  0.05  &  false  &  false  &  1.096\\
      361  &  8  &  8  &  true  &  1.069  &  true  &  2.746  &  true  &  0.156  &  false  &  0.045  &  false  &  false  &  4.017\\
      363  &  4  &  3  &  true  &  0.244  &  true  &  0.199  &  true  &  0.077  &  false  &  0.041  &  false  &  false  &  0.561\\
      364  &  14  &  12  &  true  &  2.483  &  true  &  7.296  &  true  &  0.55  &  false  &  0.064  &  false  &  false  &  10.394\\
      413  &  5  &  5  &  false  &  1.55  &  false  &  22.323  &  false  &  0.117  &  false  &  0.053  &  false  &  false  &  24.044\\
      459  &  4  &  3  &  true  &  0.542  &  true  &  0.224  &  false  &  0.18  &  false  &  0.068  &  true  &  false  &  1.014\\
      460  &  4  &  3  &  false  &  1.025  &  false  &  0.936  &  false  &  0.143  &  false  &  0.216  &  false  &  false  &  2.321\\
      475  &  23  &  22  &  true  &  97.876  &  true  &  3377.021  &  true  &  0.231  &  false  &  0.062  &  false  &  false  &  3475.192\\
      484  &  2  &  1  &  true  &  0.384  &  true  &  0.143  &  false  &  0.099  &  false  &  0.048  &  true  &  false  &  0.674\\
      485  &  2  &  1  &  false  &  0.564  &  false  &  0.354  &  false  &  0.209  &  false  &  0.042  &  false  &  false  &  1.169\\
      486  &  2  &  2  &  true  &  0.119  &  true  &  0.106  &  false  &  0.073  &  false  &  0.041  &  true  &  false  &  0.339\\
      487  &  6  &  6  &  true  &  0.475  &  true  &  1.008  &  false  &  0.099  &  false  &  0.045  &  true  &  false  &  1.628\\
      491  &  57  &  57  &  true  &  123.138  &  true  &  536.865  &  false  &  2.067  &  true  &  0.007  &  true  &  true  &  662.08\\
      492  &  52  &  52  &  true  &  85.606  &  true  &  284.753  &  false  &  1.123  &  true  &  0.003  &  true  &  true  &  371.489\\
      519  &  3  &  3  &  true  &  1.357  &  true  &  2.367  &  false  &  5.142  &  false  &  0.097  &  true  &  false  &  8.964\\
      546  &  7  &  3  &  true  &  0.327  &  true  &  0.338  &  true  &  0.109  &  false  &  0.042  &  false  &  false  &  0.817\\
      559  &  90  &  71  &  true  &  4.742  &  true  &  7.525  &  true  &  0.19  &  false  &  0.053  &  false  &  false  &  12.515\\
      584  &  35  &  9  &  true  &  0.4  &  true  &  0.655  &  false  &  0.095  &  false  &  0.043  &  true  &  false  &  1.194\\
      619  &  10  &  8  &  true  &  0.411  &  true  &  0.443  &  true  &  0.087  &  false  &  0.052  &  false  &  false  &  0.994\\
      629  &  5  &  5  &  true  &  0.209  &  true  &  0.197  &  false  &  0.079  &  false  &  0.046  &  true  &  false  &  0.532\\
      647  &  11  &  11  &  false  &  0.854  &  false  &  16.436  &  false  &  0.165  &  false  &  0.051  &  false  &  false  &  17.507\\
      \hline
    \end{tabular}
  \end{table}
\clearpage
}

We conducted our computations on a 2.40 GHz Intel Xeon E5-4640 with 512 GB RAM
and 32 physical cores providing 64 CPUs via hyper-threading. For parallelization
of the jobs for the individual models we used GNU Parallel \cite{Tange:11a}.
Results and timings are collected in Table~\ref{tab:computations}. With a time
limit of one hour CPU time per model we succeeded on 78 models, corresponding to
60\%, the largest of which, no.~559, has 90 polynomials in 71 dimensions. The
median of the overall computation times for the successful models is 1.419~s. We
would like to emphasize that our focus here is illustrating and evaluating our
overall approach, rather than obtaining new insights into the models. Therefore
our code in Figure~\ref{fig:maple} is very straightforward without any
optimizations. In particular, computation continues even when one relevant
subtest has already failed. More comprehensive results on our dataset can be
found in \cite{GrigorievIosif:19a}.

Among our 78 successfully computed models, we detected 20 coset cases,
corresponding to 26\%. Two out of those 20 are even group cases. Among the 58
other cases, 46, corresponding to 78\%, fail only due to their emptiness $\eta$; we
know from \cite{GrigorievIosif:19a} that many such cases exhibit in fact coset
structure when considered in suitable lower-dimensional spaces, possibly after
prime decomposition. Finally notice that our example reaction from the
Introduction, no.~92, is among the smallest ones with a coset structure.

\section{Conclusions and Future Work}\label{se:conclusions}
We have used Hilbert's Nullstellensatz to derive important information about the
varieties of biological models with a polynomial vector field $F$. The key
technical idea was generalizing from pure algebra to more general first-order
logic. Recall from Section~\ref{se:generalization} that except for non-emptiness
of $V(F)^*$ the information we obtained is valid in \emph{all fields} of
characteristic 0. Wherever we discovered non-emptiness, this holds at least in
\emph{all algebraically closed fields} of characteristic 0. For transferring our
obtained results to real closed fields, e.g., subtropical methods
\cite{Sturm:15b,FontaineOgawa:17b,HongSturm:18a} provide fast heuristic tests
for the non-emptiness of $V(F)^*$ there.

Technically, we only used polynomial arithmetic and polynomial radical
membership tests. This means that on the software side there are many
off-the-shelf computer algebra systems and libraries available where our ideas
could be implemented, robustly and with little effort. This in turn makes it
attractive for the integration with software from systems biology, which could
open exciting new perspectives for symbolic computation with applications
ranging from the fundamental research in the life sciences to state-of-the-art
applied research in medicine and pharmacology.

We had motivated our use of Hilbert's Nullstellensatz by viewing it as a
decision procedure for the universal fragment of first-order logic in
algebraically closed fields, which is sufficient for our purposes. Our focus on
algebraically closed fields here is in accordance with the majority of existing
literature on toricity. However, it is generally accepted that from a scientific
point of view, real closed fields are the appropriate domain to consider.

We have seen in Section~\ref{se:complexity} that the theoretical complexities
for general decision procedures in algebraically closed fields vs.~real closed
fields strongly resemble each other. What could now take the place of Hilbert's
Nullstellensatz over the reals with respect to practical computations on model
sizes as in Table~\ref{tab:computations} or even larger? A factor of 10 could
put us in the realm of models currently used in the development of drugs for
diabetes or cancer. One possible answer is \emph{satisfiability modulo theories
  solving (SMT)} \cite{DBLP:journals/jacm/NieuwenhuisOT06}.\footnote{SMT
  technically aims at the existential fragment, which in our context is
  equivalent to the universal fragment via logical negation.} SMT is incomplete
in the sense that it often proves or disproves validity, but it can yield
``unknown'' for specific input problems. When successful, it is typically
significantly faster than traditional algebraic decision procedures. For coping
with incompleteness one can still fall back into real quantifier elimination.
Interest in collaboration between the SMT and the symbolic computation
communities exists on both sides \cite{AbrahamAbbott:2016a,AbrahamAbbott:2016b}.

\subsubsection*{Acknowledgments}
This work has been supported by the interdisciplinary bilateral project
ANR-17-CE40-0036/DFG-391322026 SYMBIONT
\cite{BoulierFages:18a,BoulierFages:18b}. We are grateful to Dima Grigoriev for
numerous inspiring and very constructive discussions around toricity.

\end{document}